\documentclass[12pt]{article}

\usepackage{amsmath}
\usepackage{amssymb,amsfonts,textcomp}

\usepackage{hyperref}
\hypersetup{pdftex, colorlinks=true, linkcolor=blue, citecolor=blue, filecolor=blue, urlcolor=blue, pdftitle=}

\sloppy

\usepackage{graphicx}
\usepackage{nicefrac}

\usepackage{vmargin}
\setpapersize{USletter}
\setmargnohfrb{1.0in}{1.0in}{1.0in}{1.0in}

\usepackage{amsthm}
\newtheorem{theorem}{Theorem}
\newtheorem{lemma}[theorem]{Lemma}
\newtheorem{corollary}[theorem]{Corollary}

\usepackage[footnotesize]{caption}

\usepackage{pxfonts}

\title{Weighted genomic distance can hardly impose a bound on the proportion of transpositions}

\author{
Shuai Jiang
\and
Max A. Alekseyev\footnote{Department of Computer Science and Engineering, University of South Carolina, Columbia, SC, USA. Email: {\tt maxal@cse.sc.edu}}
}

\begin{document}
\maketitle
\thispagestyle{empty}

\begin{abstract}
\emph{Genomic distance} between two genomes, i.e., the smallest number of genome rearrangements required to transform one genome into the other, 
is often used as a measure of evolutionary closeness of the genomes in comparative genomics studies. 
However, in models that include rearrangements of significantly different ``power'' such as \emph{reversals} (that are ``weak'' and most frequent rearrangements) and
\emph{transpositions} (that are more ``powerful'' but rare), the genomic distance typically corresponds to a transformation
with a large proportion of transpositions, which is not biologically adequate. 

\emph{Weighted genomic distance} is a traditional approach to bounding the proportion of transpositions by 
assigning them a relative weight $\alpha>1$. 
A number of previous studies addressed the problem of computing weighted genomic distance with $\alpha\leq 2$.

Employing the model of multi-break rearrangements on circular genomes, that captures both reversals (modelled as \emph{2-breaks}) 
and transpositions (modelled as \emph{3-breaks}), 
we prove that for $\alpha\in(1,2]$, a minimum-weight transformation may entirely consist of transpositions,
implying that the corresponding weighted genomic distance does not actually achieve its purpose of bounding the proportion of transpositions.
We further prove that for $\alpha\in(1,2)$, the minimum-weight transformations do not depend on 
a particular choice of $\alpha$ from this interval. We give a complete characterization of such transformations
and show that they coincide with the transformations that 
at the same time have the shortest length and make the smallest number of breakages in the genomes.

Our results also provide a theoretical foundation for the empirical observation that for $\alpha < 2$, 
transpositions are favored over reversals in the minimum-weight transformations.
\end{abstract}

\section{Introduction}

Genome rearrangements are evolutionary events that change genomic architectures. 
Most frequent rearrangements are \emph{reversals} (also called \emph{inversions}) that ``flip'' continuous segments within single chromosomes.
Other common types of rearrangements are \emph{translocations} that ``exchange'' segments from different chromosomes and \emph{fission/fusion} 
that respectively ``cut''/``glue'' chromosomes.

Since large-scale rearrangements happen rarely and have dramatic effect on the genomes, the number of rearrangements (\emph{genomic distance}\footnote{We 
remark that the term \emph{genomic distance} 
sometimes is used to refer to a particular distance under reversals, translocations, fissions, and fusions.}) between two genomes
represents a good measure for their evolutionary remoteness and often is used as such in phylogenomic studies.
Depending on the model of rearrangements, there exist different types of genomic distance~\cite{Fertin2009}.

Particularly famous examples are the \emph{reversal distance} between unichromosomal genomes~\cite{HP1} 
and the genomic distance between multichromosomal genomes under all aforementioned types of rearrangements~\cite{HP2}.
Despite that both these distances can be computed in polynomial time, their analysis is somewhat complicated, thus
limiting their applicability in complex setups.
The situation becomes even worse when the chosen model includes more ``complex'' rearrangement operations such as \emph{transpositions} 
that cut off a segment of a chromosome and insert it into some other place in the genome. 
Computational complexity of most distances involving transpositions, including the \emph{transposition distance}, 
remains unknown~\cite{Radcliffe05,Bader2007,Elias2007}.
To overcome difficulties associated with the analysis of genomic distances
many researchers now use simpler models of multi-break~\cite{AlekseyevTCS2008}, DCJ~\cite{Yancopoulos05}, block-interchange~\cite{Christie1996} rearrangements 
as well as \emph{circular} instead of \emph{linear} genomes, which give reasonable approximation to original genomic distances~\cite{AlekseyevJCB2008}.

Another obstacle in genomic distance-based approaches arises from the fact that transposition-like rearrangements 
are at the same time much rare and ``powerful'' than reversal-like rearrangements. 
As a result, in models that include both reversals and transpositions, 
the genomic distance typically corresponds to rearrangement scenarios with a large proportion of transpositions,
which is not biologically adequate. A traditional approach to bounding the proportion of transpositions is
\emph{weighted genomic distance} defined as the minimum weight of a transformation between two genomes, 
where transpositions are assigned a relative weight $\alpha>1$~\cite{Fertin2009}.
A number of previous studies addressed the weighted genomic distance for $\alpha\leq 2$.
In particular, 
Bader and Ohlebusch~\cite{Bader2007} developed a 1.5-approximation algorithm for $\alpha \in [1,2]$.
For $\alpha=2$, Eriksen~\cite{Eriksen2001} proposed a $(1+\epsilon)$-approximation algorithm 
(for any $\epsilon>0$).

Employing the model of multi-break rearrangements~\cite{AlekseyevTCS2008} on circular genomes, 
that captures both reversals (modelled as 2-breaks) and transpositions (modelled as 3-breaks), 
we prove that for $\alpha\in (1,2]$, a minimum-weight transformation may entirely consist of transpositions.
Therefore, the corresponding weighted genomic distance does not actually achieve its purpose of bounding the proportion of transpositions.
We further prove that for $\alpha\in(1,2)$, the minimum-weight transformations do not depend on a particular choice of $\alpha$ 
from this interval (thus are the same, say, for $\alpha=1.001$ and $\alpha=1.999$), 
and give a complete characterization of such transformations. 
In particular, we show that these transformations coincide with those that 
at the same time have the shortest length and make the smallest number of breakages in the genomes,
first introduced by Alekseyev and Pevzner~\cite{Alekseyev2007}.

Our results also provide a theoretical foundation for the empirical observation of Blanchette et al.~\cite{Blanchette1996}  
that for $\alpha < 2$, transpositions are favored over reversals in the minimum-weight transformations.

\section{Multi-break Rearrangements and Breakpoint Graphs}\label{Sec1}

We represent a circular chromosome on $n$ genes $x_1,x_2,\dots,x_n$ as a cycle graph on $2n$ edges alternating between directed ``obverse'' edges,
encoding genes and their directionality, and undirected ``black'' edges, connecting adjacent genes (Fig.~\ref{Fig2}a).
A genome consisting of $m$ chromosomes is then represented as $m$ such cycles.
The edges of each color form a perfect matching.

A \emph{$k$-break} rearrangement~\cite{AlekseyevTCS2008} is defined as replacement of a set of $k$ black edges 
in a genome with a different set of $k$ black edges forming matching on the same $2k$ vertices.
In the current study we consider only 2-break (representing reversals, translocations, fissions, fusions) and 3-break rearrangements (including transpositions).

For two genomes $P$ and $Q$ on the same set of genes,\footnote{From now on, we assume that given genomes are always one the same set of genes.} 
represented as black-obverse cycles and gray-obverse cycles respectively,
their superposition is called the \emph{breakpoint graph} $G(P,Q)$  \cite{Bafna1996}.
Hence, $G(P,Q)$ consists of edges of three colors (Fig.~\ref{Fig2}b):
directed ``obverse'' edges representing genes, undirected black edges representing adjacencies in the genome $P$,
and undirected gray edges representing adjacencies in the genome $Q$.
We ignore the obverse edges in the breakpoint graph and focus on the black and gray edges forming a collection of black-gray alternating cycles (Fig.~\ref{Fig2}c).

\begin{figure}[!t]
\centering \includegraphics[scale=0.4]{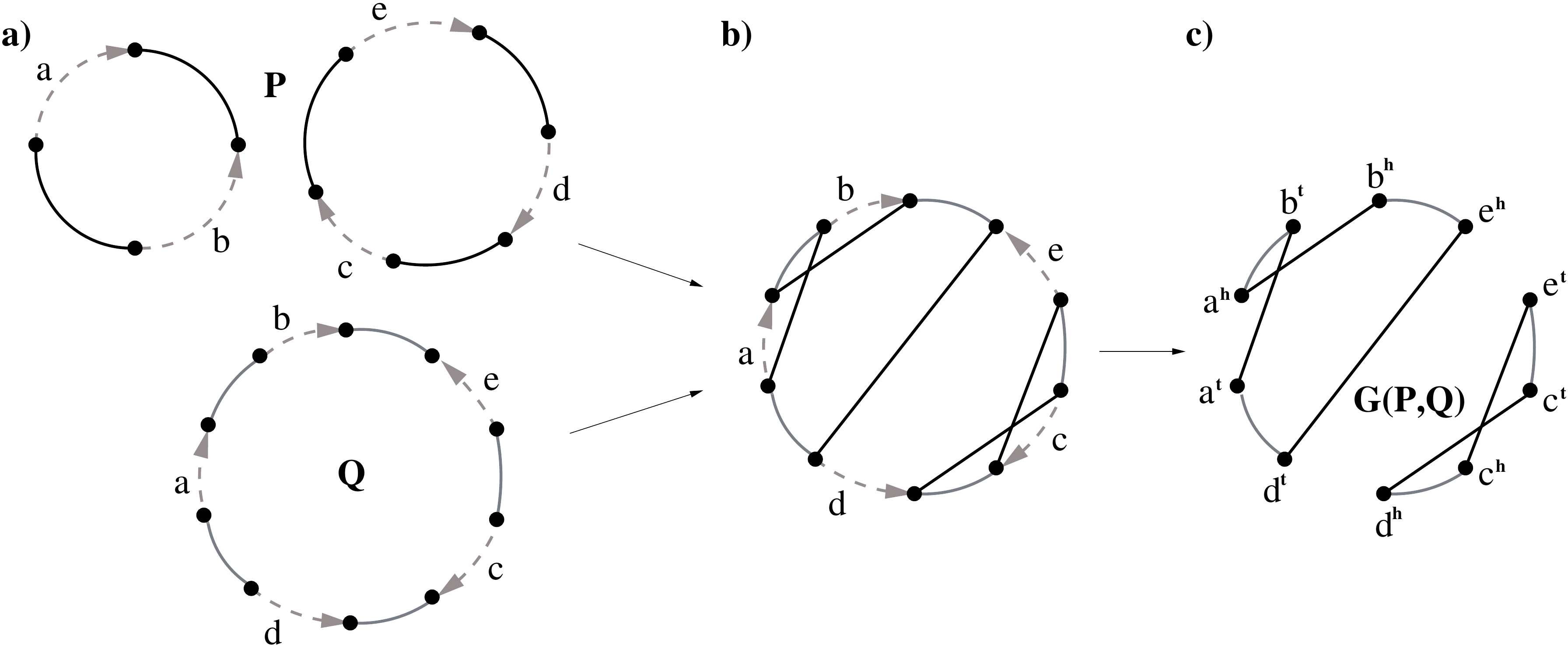}
\caption{{\bf a)} Graph representation of a two-chromosomal genome 
$P=(+a-b)(+c+e+d)$
as two black-obverse cycles and a unichromosomal genome $Q=(+a+b-e+c-d)$ as a gray-obverse cycle.
{\bf b)} The superposition of the genomes $P$ and $Q$.
{\bf c)} The breakpoint graph $G(P,Q)$ of the genomes $P$ and $Q$ (with removed obverse edges).
}
\label{Fig2}
\end{figure}

\begin{figure}[!t]
\centering \includegraphics[scale=0.4]{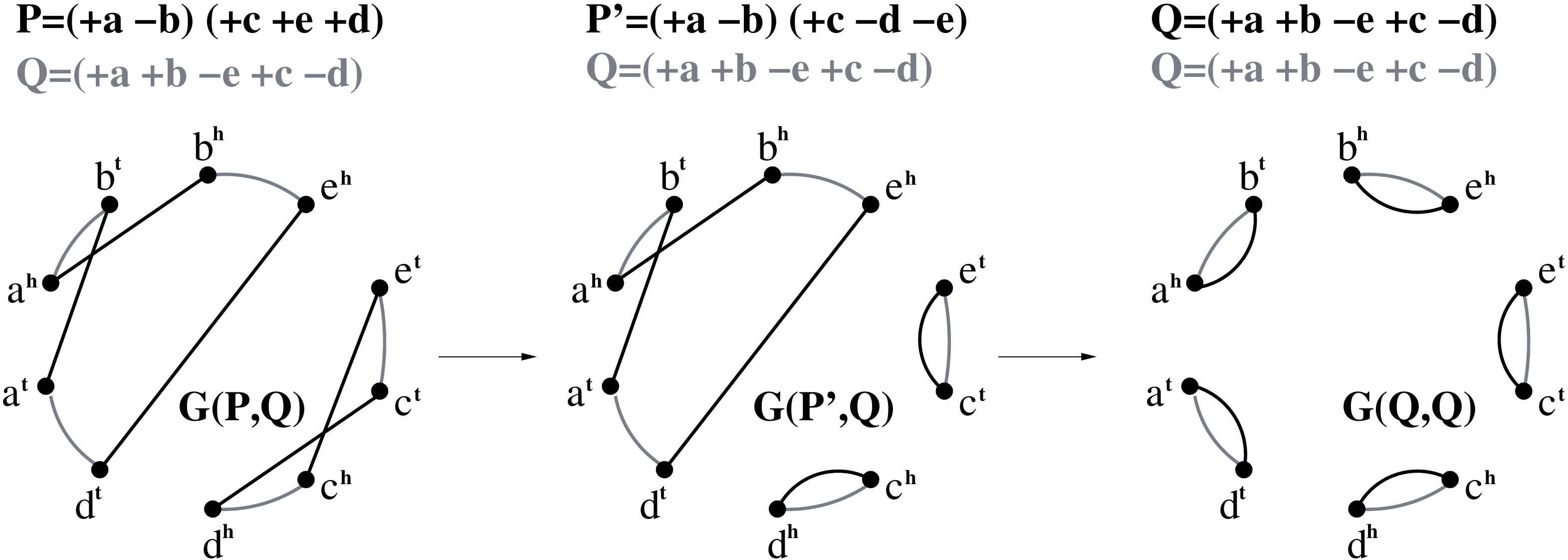}
\caption{A transformation between the genomes $P$ and $Q$ (defined in Fig.~\ref{Fig2}) 
and the corresponding transformation between the breakpoint graphs $G(P,Q)$ and $G(Q,Q)$
with a 2-break followed by a complete 3-break.}
\label{Fig3}
\end{figure}

A sequence of rearrangements transforming genome $P$ into genome $Q$ is called \emph{transformation}.
The length of a shortest transformation using $k$-breaks ($k=2$ or $3$) is called the \emph{$k$-break distance} between genomes $P$ and $Q$.

Any transformation of a genome $P$ into a genome $Q$ corresponds to a transformation of the breakpoint graph $G(P,Q)$ into
the \emph{identity breakpoint graph} $G(Q,Q)$ (Fig.~\ref{Fig3}). A close look at the increase in the number of black-gray cycles along this transformation,
allows one to obtain a formula for the distance between genomes $P$ and $Q$.
Namely, the 2-break distance is related to the number $c(P,Q)$ of black-gray cycles 
in $G(P,Q)$, while the 3-break distance is related to the number $c^{odd}(P,Q)$ of \emph{odd} black-gray cycles (i.e., black-gray cycles with an odd number of black edges):

\begin{theorem}[\cite{Yancopoulos05}]\label{Th2dist}
The 2-break distance between genomes $P$ and $Q$ is
$$d_2(P,Q) = |P| - c(P,Q).$$
\end{theorem}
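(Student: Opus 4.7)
The plan is to prove the formula $d_2(P,Q) = |P| - c(P,Q)$ by establishing matching upper and lower bounds, both obtained through a careful analysis of how a single 2-break affects the cycle count of the breakpoint graph.

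First, I would establish the lower bound via a potential-function argument. Let $\rho$ be a 2-break applied to $P$, producing a genome $P'$; it replaces two black edges $e_1,e_2$ in $G(P,Q)$ with two new black edges $e_1',e_2'$ on the same four endpoints. I would perform a case analysis on the cycles containing $e_1$ and $e_2$. If $e_1$ and $e_2$ lie in distinct black-gray cycles, then after $\rho$ these cycles are joined into one, so $c$ decreases by $1$. If $e_1$ and $e_2$ lie in the same cycle $C$, then depending on which of the two possible new pairings is chosen, $C$ either remains a single cycle or splits into two, so $c$ increases by at most $1$. In all cases $c(P',Q) \le c(P,Q)+1$. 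Since $G(Q,Q)$ consists of $|P|$ trivial cycles of length two (each black edge paired with a parallel gray edge), reaching it from $G(P,Q)$ requires at least $|P|-c(P,Q)$ 2-breaks.

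Second, I would prove the matching upper bound by exhibiting, whenever $G(P,Q)\neq G(Q,Q)$, a 2-break that strictly increases the cycle count. If the graphs differ, some black-gray cycle $C$ has length at least $4$, hence contains two black edges $e_1,e_2$ separated along $C$ by a gray edge. Replacing $\{e_1,e_2\}$ by the unique pair of new black edges that pairs up with the adjacent gray edges along $C$ splits $C$ into two shorter black-gray cycles, increasing $c$ by exactly one. Iterating this construction yields a transformation of $P$ into $Q$ of length exactly $|P|-c(P,Q)$, matching the lower bound.

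The main obstacle is the second step: one must verify that the chosen 2-break truly produces two cycles rather than one, which requires tracking the alternation of colors around $C$ and checking that the "aligning" choice of $e_1',e_2'$ is indeed the splitting one. The case analysis in the first step must also handle correctly the subcase where $e_1$ and $e_2$ share a vertex and the degenerate cycles of length two, ensuring the inequality $c(P',Q)\le c(P,Q)+1$ holds uniformly.
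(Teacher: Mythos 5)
Your proof is correct and is exactly the standard argument this result rests on; the paper itself does not prove Theorem~\ref{Th2dist} but imports it from Yancopoulos et al., noting only that it follows from ``a close look at the increase in the number of black-gray cycles,'' which is precisely your two-sided bound ($\Delta c\le 1$ per 2-break for the lower bound, and an explicit cycle-splitting 2-break for the upper bound). One small simplification: the subcase you worry about where $e_1$ and $e_2$ share a vertex cannot occur, since the black edges form a perfect matching, so distinct black edges are always vertex-disjoint.
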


\begin{theorem}[\cite{AlekseyevTCS2008}]\label{Th3dist}
The 3-break distance between genomes $P$ and $Q$ is
$$d_3(P,Q) = \frac{|P| - c^{odd}(P,Q)}2.$$
\end{theorem}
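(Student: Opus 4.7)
The plan is to prove matching lower and upper bounds on $d_3(P,Q)$.

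Since $G(Q,Q)$ consists of $|P|$ trivial 2-cycles, each containing a single black edge, $c^{odd}(Q,Q)=|P|$, so any transformation of $G(P,Q)$ into $G(Q,Q)$ must raise $c^{odd}$ by exactly $|P|-c^{odd}(P,Q)$. For the lower bound I would show that a single 3-break changes $c^{odd}$ by at most $2$. A parity observation makes this clean: in every black-gray cycle the numbers of black and gray edges are equal, and 3-breaks preserve the total number of black edges, so $c^{odd}$ is always congruent to $|P|$ modulo $2$; in particular $\Delta c^{odd}$ is even and $(|P|-c^{odd}(P,Q))/2$ is a non-negative integer. The bound $|\Delta c^{odd}|\le 2$ then follows from a short case analysis on how the three black edges affected by a 3-break distribute among the cycles of the current breakpoint graph, namely three in one cycle, $2{+}1$ across two cycles, or $1{+}1{+}1$ across three cycles. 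In each case the new black matching on the same six endpoints, combined with the unchanged gray edges incident to them, can produce at most three cycles, so by the parity constraint the number of odd cycles grows by at most $2$. This already gives $d_3(P,Q)\ge (|P|-c^{odd}(P,Q))/2$.

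For the upper bound I would induct on $|P|-c^{odd}(P,Q)$ by exhibiting, whenever $G(P,Q)\neq G(Q,Q)$, a 3-break with $\Delta c^{odd}=+2$. Two subcases arise. If some non-trivial \emph{odd} cycle $C$ is present, i.e., one with $b\ge 3$ black edges, I would pick three consecutive black edges of $C$ and apply the 3-break that re-matches their six endpoints pairwise along the intervening gray edges; this peels off two trivial cycles from $C$, turning one odd cycle into three, hence $\Delta c^{odd}=+2$. Otherwise every non-trivial cycle is even; picking such a cycle $C'$ together with a black edge from any other cycle (necessarily trivial in the tightest case $b'=2$), the analogous re-matching splits $C'$ into two odd cycles, again giving $\Delta c^{odd}=+2$. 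Combined with the lower bound, this yields the claimed formula.

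The main obstacle is the local combinatorial bookkeeping in both directions. For the lower bound one must check that in every distribution of the three replaced edges no new matching on the six endpoints can produce more than three cycles, so that the parity argument delivers $|\Delta c^{odd}|\le 2$. For the upper bound one must verify that the explicit re-matching described really yields the advertised cycle decomposition in each subcase, especially in the all-even setting, where an auxiliary cycle has to supply the third edge of the 3-break. Neither step is conceptually deep, but the case diagram is finicky; once it is handled, the upper and lower bounds coincide and the distance formula follows.
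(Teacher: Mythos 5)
The paper itself gives no proof of this statement---it is quoted verbatim from \cite{AlekseyevTCS2008}---so there is no internal proof to compare against; judged on its own, your argument is the standard one for this formula and is essentially sound. Your lower bound is precisely the content of Lemma~\ref{Le0} and the first half of Lemma~\ref{Le1}: the parity observation that $c^{odd}\equiv |P|\pmod 2$ (so $\Delta_r c^{odd}$ is even) combined with the crude bound $|\Delta_r c^{odd}|\le 3$ yields $|\Delta_r c^{odd}|\le 2$, hence at least $(|P|-c^{odd}(P,Q))/2$ steps are needed. The upper bound is where care is required, and your case split is correct except at one spot: in the all-even case with a non-trivial cycle $C'$ having only $b'=2$ black edges, a \emph{complete} 3-break that genuinely re-matches all six endpoints (two black edges from $C'$, one from a trivial cycle $D$) does \emph{not} give $\Delta c^{odd}=+2$; at best it peels one trivial odd cycle off $C'$ while merging the remainder of $C'$ with $D$ into an even cycle, for a net change of $0$. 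The move that works there is the non-complete 3-break that returns $D$'s black edge unchanged and acts as a 2-break splitting $C'$ into two odd cycles; this is still a legitimate 3-break under the paper's definition (``a different set of $k$ black edges''), so the induction goes through, but you should say explicitly that the third edge is restored rather than appeal to ``the analogous re-matching.'' (If $C'$ has $b'\ge 4$ black edges, or if there are two non-trivial even cycles, a genuinely complete 3-break achieving $+2$ does exist, cf.\ Fig.~\ref{Forbid2}.)
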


\section{Breakages and Optimal Transformations}\label{SecShortest}

Alekseyev and Pevzner~\cite{Alekseyev2007} studied the number of breakages\footnote{In \cite{Alekseyev2007}, the term \emph{break} 
is used. We use \emph{breakage} to avoid confusion with $k$-break rearrangements.} 
in transformations. The number of breakages made by a rearrangement is defined as the actual number of edges changed by this rearrangement. 
A 2-break always makes 2 breakages, while a 3-break can make 2 or 3 breakages. 
A 3-break making 3 breakages is called \emph{complete 3-break}. We treat non-complete 3-breaks as 2-breaks.

Alekseyev and Pevzner~\cite{Alekseyev2007} proved that between any two genomes, there always exists a transformation 
that simultaneously has the shortest length and makes the smallest number of breakages. 
We call such transformations \emph{optimal}.

For a 3-break $r$, 
we let $n_3(r)=1$ if $r$ makes 3 breakages (i.e., $r$ is a complete 3-break) and $n_3(r)=0$ otherwise. 
For a transformation $t$, we further define
$$n_2(t) = \sum_{r\in t} \left(1-n_3(r)\right)\qquad\text{and}\qquad n_3(t) = \sum_{r\in t} n_3(r)$$
that is, $n_2(t)$ and $n_3(t)$ are correspondingly the number of 2-breaks and complete 3-breaks in $t$.
If 2-breaks and complete 3-breaks are assigned respectively the weights $1$ and $\alpha$, then the weight of a transformation $t$ is
$$W_{\alpha}(t) = n_2(t) + \alpha\cdot n_3(t).$$

It is easy to see that a transformation $t$ has the length $n_2(t)+n_3(t)=W_1(t)$ and makes $2\cdot n_2(t) + 3\cdot n_3(t) = 2\cdot W_{\nicefrac32}(t)$ breakages overall.
Therefore, a transformation is optimal if and only if it simultaneously minimizes $W_1(t)$ and $W_{\nicefrac32}(t)$.
We generalize this result 
in Section~\ref{Sec2}
by showing 
that $\nicefrac{3}{2}$ can be replaced with any $\alpha\in (1,2)$.

For a rearrangement $r$ applied to a breakpoint graph, let $\Delta_r c^{odd}$ and $\Delta_r c^{even}$ be the resulting 
increase in the number of respectively odd and even black-gray cycles, respectively. 
Clearly, $\Delta_r c^{odd} + \Delta_r c^{even} = \Delta_r c$ gives the increase in the total number of black-gray cycles.

\begin{lemma}\label{Le0}
For any 3-break $r$, 
\begin{itemize}
\item $|\Delta_r c|\leq 1+n_3(r)$;
\item $\Delta_r c^{odd}$ is even and $|\Delta_r c^{odd}|\leq 2$;
\item $|\Delta_r c^{even}|\leq 1+n_3(r)$.
\end{itemize}
\end{lemma}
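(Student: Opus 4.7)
My plan is to analyze how the black-gray cycles in the breakpoint graph change under the rearrangement $r$. Let $k = 2 + n_3(r) \in \{2, 3\}$ be the number of black edges actually modified by $r$, and let $\ell$ denote the number of distinct black-gray cycles containing these $k$ edges, so that $1 \leq \ell \leq k$. Removing the $k$ old black edges disrupts these $\ell$ cycles into exactly $k$ alternating paths; reinserting the $k$ new black edges (which form a perfect matching on the $2k$ path endpoints) closes them into some number $m$ of new cycles, with $1 \leq m \leq k$. Hence $\Delta_r c = m - \ell$, giving $|\Delta_r c| \leq k - 1 = 1 + n_3(r)$ and settling the first bullet.

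For the second bullet, I would first note that $r$ preserves the total number of black edges, and hence also the parity of the sum of the cycle lengths measured in black edges, which is in turn the parity of $c^{odd}$; thus $\Delta_r c^{odd}$ is even. For the magnitude bound, let $o$ and $o'$ denote the numbers of odd cycles among the affected cycles before and after $r$. Since $o \leq \ell \leq 3$ and $o' \leq m \leq 3$, we have $|\Delta_r c^{odd}| = |o' - o| \leq 3$, and combined with evenness this forces $|\Delta_r c^{odd}| \leq 2$.

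The third bullet requires the most care. I would use $\Delta_r c^{even} = \Delta_r c - \Delta_r c^{odd}$ and carry out a short case analysis over the values of $(\ell, m)$ together with the parity constraints linking $o$ and $o'$. The potentially problematic configurations are those in which $\Delta_r c$ and $\Delta_r c^{odd}$ have opposite signs and both achieve large absolute values. I expect the crux of the argument to lie in ruling out two such combinations: for a 2-break, the configuration $(\ell, m) = (2, 2)$ with $|\Delta_r c^{odd}| = 2$, which is excluded because a 2-break acting on edges from two different cycles can only merge them (forcing $m = 1$) or act trivially; and, for a complete 3-break, the configuration $(\ell, m) = (2, 1)$ with $\Delta_r c^{odd} = 2$, which is ruled out by $o' \leq m = 1$. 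Once these extremes are removed, the inequality $|\Delta_r c^{even}| \leq 1 + n_3(r)$ falls out in every remaining configuration by direct substitution.
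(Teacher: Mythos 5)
Your treatment of the first two bullets is correct and matches the paper's argument: you bound $|\Delta_r c|$ by counting destroyed and created cycles, and you get the parity of $\Delta_r c^{odd}$ from conservation of the total number of black edges. The gap is in the third bullet. For a complete 3-break, the constraints you actually have available --- $1\le \ell,m\le 3$, $|\Delta_r c|=|m-\ell|\le 2$, $o\le\ell$, $o'\le m$, and $\Delta_r c^{odd}$ even --- do \emph{not} suffice. The configuration $(\ell,m)=(2,3)$ with both destroyed cycles odd and all three created cycles even satisfies every one of them and gives $\Delta_r c^{even}=\Delta_r c-\Delta_r c^{odd}=1-(-2)=3>2$; symmetrically, $(\ell,m)=(3,2)$ with $o=0$, $o'=2$ gives $\Delta_r c^{even}=-3$. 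The configuration you single out as the crux for 3-breaks, $(\ell,m)=(2,1)$ with $\Delta_r c^{odd}=2$, is indeed impossible, but it accounts for only one route to $\Delta_r c^{even}=-3$; the two configurations above survive all of your exclusions, so the claim that the remaining cases ``fall out by direct substitution'' is false as stated.

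What is missing is a structural fact linking $m$ to $\ell$ through the \emph{completeness} of the 3-break: a newly created cycle consisting of a single self-closed path can arise only when that path comes from a cycle that lost at least two black edges, since otherwise the closing edge coincides with the deleted black edge and $r$ would make fewer than three breakages. This forbids $(\ell,m)=(2,3)$ (the path from the cycle that lost only one edge cannot self-close) and $(\ell,m)=(3,2)$ (one path would have to form a cycle by itself), which is precisely the step the paper supplies when it observes that a 3-break cannot destroy two cycles while creating three, and symmetrically. With that observation added, your case analysis closes and the rest of the argument is sound.
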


\begin{proof}
A 3-break $r$ operating on black edges in the breakpoint graph $G(P,Q)$ 
destroys at least one and at most three black-gray cycles. 
On the other hand, it creates at least one and at most three new black-gray cycles. 
Therefore, $|\Delta_r c| \leq 3-1 = 2$.
Similarly, if $n_3(r)=0$, then $|\Delta_r c| \leq 2-1 = 1$.

By similar arguments, we also have $|\Delta_r c^{odd}| \leq 3$ and $|\Delta_r c^{even}| \leq 3$.

Since the total number of black edges in destroyed and created black-gray cycles
is the same, $\Delta_r c^{odd}$ must be even. 
Combining this with $|\Delta_r c^{odd}|\leq 3$, we conclude that $|\Delta_r c^{odd}|\leq 2$.

If $\Delta_r c^{even}=3$, then the destroyed cycles must be odd, implying that $\Delta_r c^{odd}=-2$.
However, it is not possible for a 3-break to destroy two cycles and create three new cycles. Hence, $\Delta_r c^{even}\ne 3$.
Similarly, $\Delta_r c^{even}\ne -3$, implying that $|\Delta_r c^{even}|\leq 2$.
If $n_3(r)=0$ (i.e., $r$ is a 2-break), similar arguments imply $|\Delta_r c^{even}|\leq 1$.
\end{proof}

\begin{lemma}\label{Le1}
A transformation $t$ between two genomes is shortest if and only if $\Delta_r c^{odd}=2$ for every $r\in t$.
Furthermore, if $t$ is a shortest transformation between two genomes, then for every $r\in t$,
\begin{itemize}
\item if $n_3(r)=0$, then $\Delta_r c^{even}=-1$;
\item if $n_3(r)=1$, then $\Delta_r c^{even}=0$ or $-2$.
\end{itemize}
\end{lemma}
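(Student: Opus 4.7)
My plan is to argue in two stages, combining Theorem~\ref{Th3dist} with the structural bounds of Lemma~\ref{Le0}.

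For the ``if and only if'' half, I would first note that $G(Q,Q)$ consists of $|P|$ trivial 2-cycles, each containing exactly one black edge, so $c^{odd}(Q,Q)=|P|$. Telescoping $\Delta_r c^{odd}$ along any transformation $t$ from $P$ to $Q$ then gives
$$\sum_{r\in t}\Delta_r c^{odd}\;=\;|P|-c^{odd}(P,Q)\;=\;2\,d_3(P,Q),$$
where the last equality is Theorem~\ref{Th3dist}. Since each summand is an even integer bounded above by $2$ (Lemma~\ref{Le0}), we obtain $|t|\geq d_3(P,Q)$, with equality precisely when $\Delta_r c^{odd}=2$ for every $r\in t$, which is the desired characterization.

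For the ``furthermore'' part, let $t$ be shortest and pick any $r\in t$, so $\Delta_r c^{odd}=2$ and hence $\Delta_r c=2+\Delta_r c^{even}$. Feeding this identity into the two inequalities $|\Delta_r c|\leq 1+n_3(r)$ and $|\Delta_r c^{even}|\leq 1+n_3(r)$ from Lemma~\ref{Le0} and intersecting the resulting interval constraints yields the two cases: when $n_3(r)=0$, the intervals $[-3,-1]$ and $[-1,1]$ meet only at $\{-1\}$, forcing $\Delta_r c^{even}=-1$ and finishing the 2-break case; when $n_3(r)=1$, the intersection $[-4,0]\cap[-2,2]=\{-2,-1,0\}$ only narrows the possibilities, and it remains to exclude the middle value $-1$.

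The main obstacle is this last exclusion for a complete 3-break. My plan is a case analysis on the multiset $(o_d,e_d,o_c,e_c)$ of odd/even cycles destroyed and created by~$r$, subject to $o_c-o_d=2$, the hypothetical $e_c-e_d=-1$, the cardinality bounds $o_d+e_d,\,o_c+e_c\leq 3$, and the parity identity $o_d\equiv o_c\pmod 2$ (inherited from the invariant $B_d=B_c$ used inside the proof of Lemma~\ref{Le0}). Each admissible tuple reduces to a matching-theoretic question on the six vertices $\{a_i,b_i\}_{i=1,2,3}$ touched by~$r$: given the chord matching $C$ induced by the remainder of the breakpoint graph, do there exist matchings $M_0,M_1$ with $M_0\cap M_1=\emptyset$ (the completeness hypothesis $n_3(r)=1$) such that $M_0\cup C$ and $M_1\cup C$ realize the prescribed cycle statistics on six vertices? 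The sub-cases with $o_c+e_c=3$ collapse quickly: they force every created abstract cycle to be a 2-cycle and hence $M_1=C$, which by completeness then forbids the 2-cycle inside $M_0\cup C$ demanded by the matching $(o_d,e_d)$. The sub-cases with $o_c+e_c=2$ are the delicate ones; I expect to finish them by a direct structural argument on the six vertices together with the parities of the chord black-edge counts $b_1,b_2,b_3$, ruling out every matching $M_1$ compatible simultaneously with completeness and with $\Delta_r c^{even}=-1$, and thereby concluding $\Delta_r c^{even}\in\{0,-2\}$.
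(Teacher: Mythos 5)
Your reduction of the first claim to Theorem~\ref{Th3dist} and Lemma~\ref{Le0} via the telescoping sum is exactly the paper's argument, and your handling of the 2-break bullet (intersecting the interval constraints from $|\Delta_r c|\leq 1$ and $|\Delta_r c^{even}|\leq 1$) is a correct, slightly more mechanical variant of the paper's observation that such a 2-break must split an even cycle into two odd ones. The problem is the complete-3-break bullet: you correctly isolate the exclusion of $\Delta_r c^{even}=-1$ as the main obstacle, but you never carry out the case analysis you outline, so as written the proof is incomplete at the only step that needs a new idea.

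Worse, that plan cannot be completed, because the exclusion is false. Let $G(P,Q)$ be a single even cycle on vertices $1,\dots,8$ with black edges $\{1,2\},\{3,4\},\{5,6\},\{7,8\}$ and gray edges $\{2,3\},\{4,5\},\{6,7\},\{8,1\}$, and let $r$ replace the black edges $\{1,2\},\{3,4\},\{5,6\}$ by $\{2,3\},\{1,5\},\{4,6\}$. All three new edges differ from all three old ones, so $n_3(r)=1$; the result is two odd cycles (with $1$ and $3$ black edges respectively), so $\Delta_r c^{odd}=2$ and $\Delta_r c^{even}=-1$. Following $r$ by the complete 3-break that splits the remaining 3-black-edge cycle into three trivial cycles yields a transformation of length $2=d_3(P,Q)$, i.e., a shortest transformation containing a complete 3-break with $\Delta_r c^{even}=-1$. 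This is precisely your ``delicate'' subcase $(o_d,e_d,o_c,e_c)=(0,1,2,0)$, and it is realizable; the $y_7$ column of the table in the proof of Theorem~\ref{Th2} already records this rearrangement type. The paper's own proof disposes of the point by asserting that $\Delta_r c^{even}\neq -1$ ``by Lemma~\ref{Le0},'' which Lemma~\ref{Le0} does not assert, so you have located a genuine defect in the lemma rather than merely failed to reprove it. The second bullet should read $\Delta_r c^{even}\in\{0,-1,-2\}$, and Theorem~\ref{Th0} then needs the value $-1$ excluded alongside $-2$; Corollary~\ref{Cor0} and Theorems~\ref{Th1} and~\ref{Th2} are unaffected.
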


\begin{proof}
A transformation $t$ of a genome $P$ into a genome $Q$ increases the number of odd black-gray cycles 
from $c^{odd}(P,Q)$ in $G(P,Q)$ to $c^{odd}(Q,Q)=|P|$ in $G(Q,Q)$ with the total increase of 
$|P|-c^{odd}(P,Q)=2\cdot d_3(P,Q)$.
By Lemma~\ref{Le0}, $\Delta_r c^{odd}\leq 2$ for every $r\in t$ and thus
$$2\cdot d_3(P,Q) = \sum_{r\in t} \Delta_r c^{odd} \leq \sum_{r\in t} 2 = 2\cdot |t|,$$
implying that $|t|=d_3(P,Q)$ (i.e., $t$ is a shortest transformation) if and only if $\Delta_r c^{odd}=2$ for every $r\in t$.

Now let $t$ be a shortest transformation and thus $\Delta_r c^{odd}=2$ for every $r\in t$.
For a 2-break $r$ to have $\Delta_r c^{odd}=2$, it must be applied to an even black-gray cycle and split it into two odd black-gray cycles.
Thus any such $r$ also decreases the number of even black-gray cycles by 1, i.e., $\Delta_r c^{even}=-1$.

If a complete 3-break $r$ has $\Delta_r c^{odd}=2$, then $\Delta_r c^{even}=\Delta_r c - \Delta_r c^{odd} \leq 2-2=0$. 
By Lemma~\ref{Le0}, we also have $\Delta_r c^{even} \geq -2$ and $\Delta_r c^{even}\ne -1$, implying 
that $\Delta_r c^{even}=0$ or $-2$.
\end{proof}

By the definition, any optimal transformation is necessarily shortest. 
However, not every shortest transformation is optimal.
The following theorem characterizes optimal transformations within the shortest transformations:

\begin{theorem}\label{Th0}
A shortest transformation $t$ between two genomes is optimal if and only if for any $r\in t$, $\Delta_r c^{even}\ne -2$.
\end{theorem}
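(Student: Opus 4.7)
The plan is to reduce optimality among shortest transformations to a single linear relation between $n_2(t)$, $n_3(t)$, and the quantity $c^{even}(P,Q)$, and then read off the characterization.

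First I would fix a shortest transformation $t$ of $P$ into $Q$. By Lemma~\ref{Le1}, every $r\in t$ with $n_3(r)=0$ contributes $\Delta_r c^{even}=-1$, while every $r\in t$ with $n_3(r)=1$ contributes either $0$ or $-2$. Let $m(t)$ denote the number of complete 3-breaks in $t$ with $\Delta_r c^{even}=-2$. Telescoping $\Delta_r c^{even}$ along $t$ and using $c^{even}(Q,Q)=0$, I get
\[
-c^{even}(P,Q) \;=\; \sum_{r\in t}\Delta_r c^{even} \;=\; -n_2(t)\;-\;2\,m(t),
\]
so $n_2(t)=c^{even}(P,Q)-2m(t)$.

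Next I would combine this with the shortness identity $n_2(t)+n_3(t)=d_3(P,Q)$ from Lemma~\ref{Le1} (equivalently Theorem~\ref{Th3dist}) to obtain $n_3(t)=d_3(P,Q)-c^{even}(P,Q)+2m(t)$. Therefore the total number of breakages in $t$ equals
\[
2\,n_2(t)+3\,n_3(t) \;=\; 2\,d_3(P,Q)+n_3(t) \;=\; 3\,d_3(P,Q)-c^{even}(P,Q)+2\,m(t),
\]
a function that depends on $t$ only through $m(t)\geq 0$. Thus among all shortest transformations, the number of breakages is bounded below by $3\,d_3(P,Q)-c^{even}(P,Q)$, with equality precisely when $m(t)=0$.

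Finally, invoking the result of Alekseyev and Pevzner~\cite{Alekseyev2007} that optimal transformations exist, this lower bound is actually attained, so it coincides with the minimum number of breakages over all shortest transformations. Hence a shortest $t$ is optimal if and only if $m(t)=0$, i.e.\ no $r\in t$ satisfies $\Delta_r c^{even}=-2$, which is exactly the statement of Theorem~\ref{Th0}. There is no real obstacle here; the only subtlety is remembering that Lemma~\ref{Le1} already rules out the values $-1$ and $+1$ for $\Delta_r c^{even}$ on complete 3-breaks in a shortest transformation, which is what makes the sum collapse so cleanly into a single counting identity.
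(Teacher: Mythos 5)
Your argument is correct and is essentially the paper's own proof: both use Lemma~\ref{Le1} to restrict $\Delta_r c^{even}$ to $-1$ for 2-breaks and to $0$ or $-2$ for complete 3-breaks in a shortest transformation, telescope to obtain $n_2(t)=c^{even}(P,Q)-2m(t)$, and write the breakage count as the constant $3\,d_3(P,Q)-c^{even}(P,Q)$ plus $2m(t)$ (the paper phrases this as $W_{\nicefrac{3}{2}}(t)=\mathrm{const}+v$ with $v=m(t)$), so that minimality forces $m(t)=0$. The one step you and the paper both leave slightly implicit is that some shortest transformation actually attains $m(t)=0$: the bare existence of an optimal transformation only guarantees that the global breakage minimum is achieved by a shortest transformation, not that this minimum equals your lower bound; but since the paper's ``which is minimal if and only if $v=0$'' rests on exactly the same attainability fact from Alekseyev and Pevzner, this is not a defect of your proof relative to theirs.
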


\begin{proof}
Let $t$ be a shortest transformation $t$ between two genomes. By Lemma~\ref{Le1}, $n_3(t)=u+v$ where $u$ is the number 
of complete 3-breaks with $\Delta_r c^{even}=0$ and $v$ is the number of complete 3-breaks with $\Delta_r c^{even}=-2$ (Fig.~\ref{Forbid2}).

\begin{figure}[!t]
\centering \includegraphics[scale=0.4]{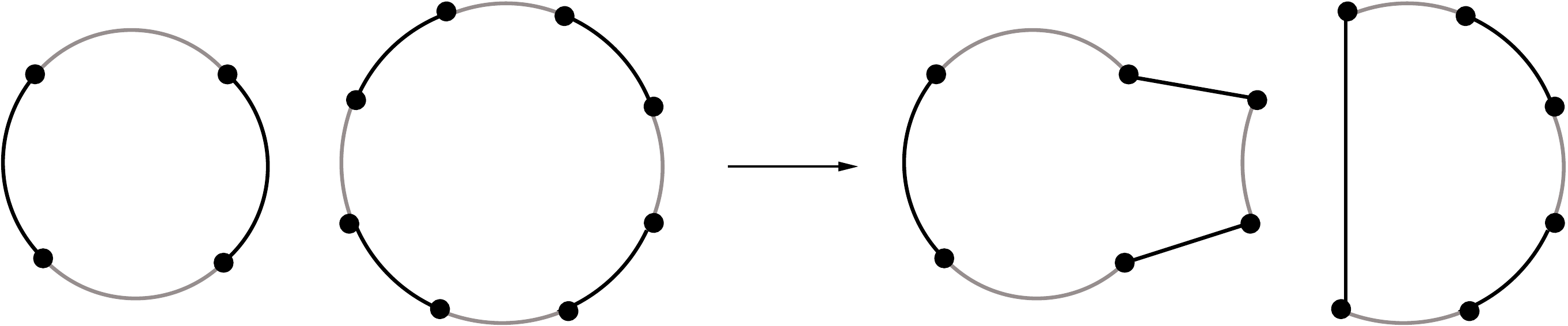}
\caption{A 3-break $r$ with $\Delta_r c^{odd} = 2$ and $\Delta_r c^{even} = -2$,
transforming two even black-gray cycles into two odd black-gray cycles. 
Such 3-breaks may appear in shortest transformations (Lemma~\ref{Le1}) but not in optimal ones (Theorem~\ref{Th0}).}
\label{Forbid2}
\end{figure}

With $n_2(t)$ 2-breaks and $n_3(t)=u+v$ complete 3-breaks $G(P,Q)$ is transformed into $G(Q,Q)$ with $|P|=|Q|$ trivial black-gray cycles, which are all odd.
By Lemma~\ref{Le1}, for the increase in the number of odd and even black-gray cycles in the breakpoint graph, we have:
$$
\begin{cases} 
c^{odd}(P,Q)+2(n_2(t)+u+v)=|P|, \\
c^{even}(P,Q)-n_2(t)-2v=0,
\end{cases}
$$
implying that 
\begin{eqnarray*}
W_{\nicefrac{3}{2}}(t) 
& = & n_2(t) + \frac{3}{2}(u+v) \\
& = & c^{even}(P,Q)-2v + \frac{3}{2}\left( \frac{|P|-c^{odd}(P,Q)}2 - c^{even}(P,Q)+2v\right) \\
& = & c^{even}(P,Q) + \frac{3}{2}\left( \frac{|P|-c^{odd}(P,Q)}2 - c^{even}(P,Q)\right) + v,
\end{eqnarray*}
which is minimal if and only if $v=0$, i.e., $\Delta_r c^{even}\ne -2$ for any $r\in t$.
\end{proof}

Lemma~\ref{Le1} and Theorem~\ref{Th0} imply:

\begin{corollary}\label{Cor0}
A transformation $t$ between two genomes is optimal if and only if for any $r\in t$,
\begin{itemize}
\item if $n_3(r)=0$, then $\Delta_r c^{odd}=2$ and $\Delta_r c^{even}=-1$;
\item if $n_3(r)=1$, then $\Delta_r c^{odd}=2$ and $\Delta_r c^{even}=0$.
\end{itemize}
\end{corollary}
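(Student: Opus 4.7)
The plan is to obtain Corollary~\ref{Cor0} as a direct assembly of Lemma~\ref{Le1} and Theorem~\ref{Th0}; no new combinatorial analysis is required, since everything has already been set up to give a per-rearrangement criterion once the two halves of ``optimal = shortest + fewest breakages'' are combined.

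First I would handle the forward direction. Assume $t$ is optimal. Then in particular $t$ is shortest, so Lemma~\ref{Le1} applies and yields that $\Delta_r c^{odd}=2$ for every $r\in t$; moreover, every 2-break in $t$ automatically satisfies $\Delta_r c^{even}=-1$, and every complete 3-break in $t$ satisfies $\Delta_r c^{even}\in\{0,-2\}$. Since $t$ is also optimal among shortest transformations, Theorem~\ref{Th0} forbids $\Delta_r c^{even}=-2$ for any $r\in t$, which leaves $\Delta_r c^{even}=0$ for each complete 3-break. This is exactly the per-rearrangement characterization stated in the corollary.

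Second, for the backward direction, I would run this reasoning in reverse. Suppose the stated conditions hold for every $r\in t$. The equality $\Delta_r c^{odd}=2$ for all $r\in t$ invokes Lemma~\ref{Le1} and forces $t$ to be shortest. The additional fact that no rearrangement has $\Delta_r c^{even}=-2$ (the 2-breaks have $-1$ and the complete 3-breaks have $0$) then allows Theorem~\ref{Th0} to upgrade ``shortest'' to ``optimal.''

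There is essentially no obstacle; the one observation worth recording is the asymmetry between the two cases. For 2-breaks, $\Delta_r c^{even}=-1$ is forced simply by being part of a shortest transformation, so it contributes no extra constraint. For complete 3-breaks, however, the value of $\Delta_r c^{even}$ is genuinely what distinguishes optimal transformations from arbitrary shortest ones: the ``bad'' value $-2$ corresponds precisely to the 3-breaks depicted in Fig.~\ref{Forbid2}, which are permissible in shortest but not in optimal transformations.
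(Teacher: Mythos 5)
Your proof is correct and is exactly the argument the paper intends: the paper states Corollary~\ref{Cor0} as an immediate consequence of Lemma~\ref{Le1} and Theorem~\ref{Th0} without writing out the details, and your two directions (optimal $\Rightarrow$ shortest $\Rightarrow$ Lemma~\ref{Le1} constraints, with Theorem~\ref{Th0} eliminating $\Delta_r c^{even}=-2$; and conversely Lemma~\ref{Le1} certifying shortestness from $\Delta_r c^{odd}=2$, then Theorem~\ref{Th0} upgrading to optimality) fill them in faithfully.
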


\begin{theorem}\label{Th1}
A transformation $t$ between genomes $P$ and $Q$ is optimal if and only if
\begin{equation}\label{n23opt}
\begin{cases}
n_2(t) = c^{even}(P,Q), \\
n_3(t) = \frac{|P|-c^{odd}(P,Q)}2 - c^{even}(P,Q).
\end{cases}
\end{equation}
\end{theorem}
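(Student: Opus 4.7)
The plan is to prove both directions of the equivalence by combining the per-rearrangement characterization of optimality from Corollary~\ref{Cor0} and Theorem~\ref{Th0} with a single telescoping identity for $c^{even}$ along $t$. Specifically, since $G(Q,Q)$ consists entirely of trivial (odd, single-edge) black-gray cycles and thus has no even cycles, the telescoping sum reads
$$\sum_{r\in t}\Delta_r c^{even} \;=\; c^{even}(Q,Q)-c^{even}(P,Q) \;=\; -c^{even}(P,Q).$$
This identity will do almost all of the work in both directions.

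For the forward implication, I would assume $t$ is optimal and plug the per-step values from Corollary~\ref{Cor0} into the telescoping identity: every 2-break contributes $\Delta_r c^{even}=-1$ and every complete 3-break contributes $\Delta_r c^{even}=0$, so the left-hand side equals $-n_2(t)$. This immediately gives $n_2(t)=c^{even}(P,Q)$, the first line of \eqref{n23opt}. Since an optimal transformation is in particular shortest, $n_2(t)+n_3(t)=|t|=d_3(P,Q)=\tfrac{|P|-c^{odd}(P,Q)}{2}$ by Theorem~\ref{Th3dist}, and subtracting the first line yields the expression for $n_3(t)$.

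For the converse, I would assume $t$ satisfies \eqref{n23opt} and first add the two equations to get $|t|=n_2(t)+n_3(t)=\tfrac{|P|-c^{odd}(P,Q)}{2}=d_3(P,Q)$, so $t$ is shortest. I then reuse the book-keeping from the proof of Theorem~\ref{Th0}: writing $n_3(t)=u+v$ where $v$ counts the complete 3-breaks with $\Delta_r c^{even}=-2$ (this split is exhaustive by the second bullet of Lemma~\ref{Le1}), the telescoping identity now reads $-n_2(t)-2v=-c^{even}(P,Q)$. The hypothesis $n_2(t)=c^{even}(P,Q)$ forces $v=0$, so no $r\in t$ has $\Delta_r c^{even}=-2$, and Theorem~\ref{Th0} yields that $t$ is optimal. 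I do not expect any real obstacle here, as the ingredients are all in place; the only point to watch is that the decomposition $n_3(t)=u+v$ is genuinely exhaustive for \emph{shortest} $t$, which is exactly what Lemma~\ref{Le1} provides.
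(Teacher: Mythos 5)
Your proof is correct and follows essentially the same route as the paper: the forward direction telescopes $\Delta_r c^{even}$ using Corollary~\ref{Cor0} and combines it with the shortest-length count, and the converse deduces shortestness from \eqref{n23opt}, applies Lemma~\ref{Le1}, and uses the $c^{even}$ telescoping sum to force $v=0$ before invoking Theorem~\ref{Th0}. The only cosmetic difference is that you obtain $n_2(t)+n_3(t)=d_3(P,Q)$ via Theorem~\ref{Th3dist} rather than by telescoping $\Delta_r c^{odd}$ directly, which is equivalent.
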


\begin{proof}
Let $t$ be an optimal transformation between genomes $P$ and $Q$. Then with $n_2(t)$ 2-breaks and $n_3(t)$ complete 3-breaks, 
it transforms $G(P,Q)$ into $G(Q,Q)$ with $|P|=|Q|$ trivial black-gray cycles, which are all odd.
By Corollary~\ref{Cor0}, we have
$$
\begin{cases} 
c^{odd}(P,Q)+2(n_2(t)+n_3(t))=|P|, \\
c^{even}(P,Q)-n_2(t)=0,
\end{cases}
$$
implying formulae \eqref{n23opt}.

Vice versa, a transformation $t$ between genomes $P$ and $Q$, satisfying \eqref{n23opt}, has the length $n_2(t)+n_3(t)=\frac{|P|-c^{odd}(P,Q)}2=d_3(P,Q)$, implying 
that $t$ is a shortest transformation. By Lemma~\ref{Le1}, $\Delta_r c^{even} = -1$ for every 2-break $r\in t$ and $\Delta_r c^{even} = 0$ or $-2$ for every complete 3-break $r\in t$.
Let $v$ be the number of complete 3-breaks $r\in t$ with $\Delta_r c^{even} = -2$. Then the increase in the number of even black-gray cycles along $t$ is
$$-c^{even}(P,Q) = - n_2(t) - 2v = - c^{even}(P,Q) - 2v,$$
implying that $v=0$ and thus $t$ is optimal by Theorem~\ref{Th0}.
\end{proof}

Theorem~\ref{Th1} implies that for some genomes, every optimal transformation consists entirely of complete 3-breaks:

\begin{corollary}\label{Cor1}
For genomes $P$ and $Q$ with $c^{even}(P,Q)=0$, 
every optimal transformation $t$ has $n_2(t)=0$ and thus consists entirely of complete 3-breaks.
\end{corollary}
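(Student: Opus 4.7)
The plan is to invoke Theorem~\ref{Th1} directly. The corollary is essentially a one-line substitution: given an optimal transformation $t$ between $P$ and $Q$, the theorem characterizes $n_2(t)$ and $n_3(t)$ via the system \eqref{n23opt}, so plugging in the hypothesis $c^{even}(P,Q)=0$ into the first equation immediately yields $n_2(t)=c^{even}(P,Q)=0$.

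The only interpretive point to spell out is that under the convention introduced in Section~\ref{SecShortest}, non-complete 3-breaks are counted together with 2-breaks in $n_2(t)$. Hence $n_2(t)=0$ simultaneously rules out genuine 2-breaks and 3-breaks that make only two breakages. Every rearrangement in $t$ must therefore be a complete 3-break, which is exactly the conclusion. For a sanity check one may also read off $n_3(t)=\tfrac{|P|-c^{odd}(P,Q)}{2}=d_3(P,Q)$ from the second equation of \eqref{n23opt} and Theorem~\ref{Th3dist}, confirming that $t$ is a shortest 3-break transformation, as it should be.

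There is no real obstacle here: all the work has already been done in Theorem~\ref{Th1}. If anything, the ``hard'' part of the argument is simply remembering the non-standard convention that bundles non-complete 3-breaks into the $n_2$ count, so that $n_2(t)=0$ is strong enough to force \emph{every} operation to be a complete 3-break, rather than merely forbidding the 2-breaks. Accordingly, the proof should consist of essentially two sentences: one invoking \eqref{n23opt} with $c^{even}(P,Q)=0$, and one invoking the counting convention to pass from ``no 2-breaks'' to ``only complete 3-breaks.''
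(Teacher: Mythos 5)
Your proof is correct and matches the paper, which presents this corollary as an immediate consequence of Theorem~\ref{Th1} with no separate argument: substituting $c^{even}(P,Q)=0$ into \eqref{n23opt} gives $n_2(t)=0$. Your extra remark about the convention that non-complete 3-breaks are counted as 2-breaks is a fair point worth making explicit, but it does not change the substance.
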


\begin{corollary}\label{Cor2}
For an optimal transformation $t$ between genomes $P$ and $Q$,
$$W_{\alpha}(t) = c^{even}(P,Q) + \alpha\cdot \left(\frac{|P|-c^{odd}(P,Q)}2-c^{even}(P,Q)\right).$$
\end{corollary}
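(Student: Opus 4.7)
The plan is to treat this as an immediate consequence of Theorem~\ref{Th1}. That theorem characterizes any optimal transformation $t$ between $P$ and $Q$ by providing explicit formulas for its counts of 2-breaks and complete 3-breaks, namely $n_2(t) = c^{even}(P,Q)$ and $n_3(t) = \frac{|P|-c^{odd}(P,Q)}{2} - c^{even}(P,Q)$. Since the weight of a transformation is defined as $W_{\alpha}(t) = n_2(t) + \alpha \cdot n_3(t)$, the corollary follows by direct substitution.

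Concretely, I would write one displayed equation chain: start from $W_{\alpha}(t) = n_2(t) + \alpha \cdot n_3(t)$, apply \eqref{n23opt} from Theorem~\ref{Th1} to replace $n_2(t)$ and $n_3(t)$, and then collect terms to obtain the claimed expression. No separate case analysis or invocation of Lemma~\ref{Le0}, Lemma~\ref{Le1}, Theorem~\ref{Th0}, or Corollary~\ref{Cor0} is needed, because all of the combinatorial content has already been absorbed into Theorem~\ref{Th1}.

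There is no genuine obstacle here: the statement is essentially a restatement of Theorem~\ref{Th1} after multiplying the $n_3(t)$ component by $\alpha$ and adding $n_2(t)$. The only mild point worth noting is that the right-hand side depends solely on the genomes $P$ and $Q$ (through $|P|$, $c^{odd}(P,Q)$, and $c^{even}(P,Q)$) and not on the particular optimal transformation chosen, which is why all optimal transformations share the same $W_{\alpha}$-weight; this observation will be important for the subsequent results comparing optimal to minimum-weight transformations.
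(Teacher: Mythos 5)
Your proposal is correct and matches the paper's intent exactly: Corollary~\ref{Cor2} is stated as an immediate consequence of Theorem~\ref{Th1}, obtained by substituting the formulas \eqref{n23opt} for $n_2(t)$ and $n_3(t)$ into the definition $W_{\alpha}(t) = n_2(t) + \alpha\cdot n_3(t)$. Your added remark that the resulting weight depends only on $P$ and $Q$ (not on the choice of optimal transformation) is a correct and relevant observation that the paper uses implicitly in Section~\ref{Sec2}.
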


\section{Weighted multi-break distance}\label{Sec2}

Let $T(P,Q)$ be the set of all transformations between genomes $P$ and $Q$. 
For a real number $\alpha$, we define the weighted distance $D_{\alpha}(P,Q)$
between genomes $P$ and $Q$ as
$$D_{\alpha}(P,Q) = \min_{t\in T(P,Q)} W_{\alpha}(t)$$
that is, the minimum possible weight of a transformation between $P$ and $Q$.

Two important examples of the weighted distance are the ``unweighted'' distance $D_1(P,Q)=d_3(P,Q)$ and the distance
$D_{\nicefrac{3}{2}}(P,Q)$ equal the half of the minimum number of breakages in a transformation between genomes $P$ and $Q$.
By the definition of an optimal transformation, 
we have $D_{\nicefrac{3}{2}}(P,Q) = W_{\nicefrac{3}{2}}(t_0)$, where $t_0$ is an optimal transformation between genomes $P$ and $Q$.
Below we prove that $D_{\alpha}(P,Q) = W_{\alpha}(t_0)$ for any $\alpha\in (1,2]$.

\begin{theorem}\label{Th2}
For $\alpha \in (1,2]$, 
$$D_{\alpha}(P,Q) = W_{\alpha}(t_0),$$ 
where $t_0$ is any optimal transformation between genomes $P$ and $Q$.

Furthermore, for $\alpha \in (1,2)$, if $D_{\alpha}(P,Q) = W_{\alpha}(t)$
for a transformation $t$ between genomes $P$ and $Q$, then $t$ is an optimal transformation.
\end{theorem}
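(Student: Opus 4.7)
The plan is to derive a matching lower bound on $W_\alpha(t)$ for any transformation $t$ by splitting $W_\alpha$ into two non-negative pieces, each with a transparent combinatorial lower bound, and then to verify that these bounds are attained precisely by optimal transformations. The starting observation is the algebraic identity
$$W_\alpha(t) = (2-\alpha)\bigl(n_2(t)+n_3(t)\bigr) + (\alpha-1)\bigl(n_2(t)+2n_3(t)\bigr),$$
whose coefficients are both non-negative for $\alpha\in[1,2]$.

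I would bound the two pieces independently. The first piece $n_2(t)+n_3(t)$ is the length $|t|$, so $n_2(t)+n_3(t)\geq d_3(P,Q)$ by Theorem~\ref{Th3dist}. The second piece $n_2(t)+2n_3(t)$ bounds the total gain in the number of black-gray cycles along $t$: summing the bound $\Delta_r c\leq 1+n_3(r)$ from Lemma~\ref{Le0} over $r\in t$ and telescoping the left-hand side yields $|P|-c(P,Q)=d_2(P,Q)\leq n_2(t)+2n_3(t)$, where the equality uses Theorem~\ref{Th2dist}. Combining these two bounds, and simplifying via $d_2(P,Q)=2\,d_3(P,Q)-c^{even}(P,Q)$, a short calculation shows that the resulting lower bound on $W_\alpha(t)$ matches the expression for $W_\alpha(t_0)$ given by Corollary~\ref{Cor2}. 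This proves $D_\alpha(P,Q)=W_\alpha(t_0)$ for all $\alpha\in(1,2]$ (and in fact for all $\alpha\in[1,2]$).

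For the uniqueness claim with $\alpha\in(1,2)$, both coefficients $2-\alpha$ and $\alpha-1$ are strictly positive, so $W_\alpha(t)=W_\alpha(t_0)$ forces both lower bounds to be tight: $n_2(t)+n_3(t)=d_3(P,Q)$ and $n_2(t)+2n_3(t)=d_2(P,Q)$. Solving this $2\times 2$ linear system gives $n_2(t)=c^{even}(P,Q)$ and $n_3(t)=d_3(P,Q)-c^{even}(P,Q)$, which by Theorem~\ref{Th1} is equivalent to $t$ being optimal. The main conceptual step is spotting the convex decomposition of $W_\alpha$ into length and cycle-gain contributions; beyond that the argument is essentially bookkeeping. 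The decomposition also makes transparent why the uniqueness statement cannot be extended: at $\alpha=2$ the length bound drops out of the combination, leaving room for non-optimal minimizers, which is exactly the mechanism behind the authors' main negative result.
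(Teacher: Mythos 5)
Your proof is correct, but it takes a genuinely different route from the paper's. The paper classifies every rearrangement into one of sixteen types according to $(\Delta_r c^{odd}, \Delta_r c^{even}, n_3(r))$, writes the telescoping cycle-count constraints as two linear equations in the sixteen multiplicities, and expands $W_\alpha(t)-W_\alpha(t_0)$ as an explicit linear combination whose coefficients are all nonnegative for $\alpha\in(1,2]$ and vanish only on the two ``optimal'' types when $\alpha\in(1,2)$. Your decomposition $W_\alpha(t)=(2-\alpha)\bigl(n_2(t)+n_3(t)\bigr)+(\alpha-1)\bigl(n_2(t)+2n_3(t)\bigr)$ replaces that bookkeeping with two clean lower bounds: the length bound $n_2(t)+n_3(t)\ge d_3(P,Q)$ from Theorem~\ref{Th3dist}, and the cycle-gain bound $n_2(t)+2n_3(t)\ge d_2(P,Q)$ obtained by telescoping $\Delta_r c\le 1+n_3(r)$ from Lemma~\ref{Le0}; the identity $d_2(P,Q)=2d_3(P,Q)-c^{even}(P,Q)$ does make the combined bound equal to the value in Corollary~\ref{Cor2}, and the uniqueness argument via the $2\times 2$ system together with the ``if and only if'' of Theorem~\ref{Th1} is sound. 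What your version buys is brevity and transparency: it isolates exactly why the argument degenerates at the endpoints, since the length bound drops out at $\alpha=2$ and the cycle-gain bound at $\alpha=1$. What the paper's enumeration buys is finer structural information: equality in its expansion forces every individual rearrangement to be of one of the two permitted types, which re-derives the per-rearrangement characterization of Corollary~\ref{Cor0} directly, whereas your route only pins down the aggregate counts $n_2(t)$ and $n_3(t)$ and must lean on the full converse direction of Theorem~\ref{Th1} (which the paper does prove) to conclude optimality.
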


\begin{proof}
Let $t$ be any transformation and $t_0$ be any optimal transformation between genomes $P$ and $Q$.

We classify all possible changes in the number of even and odd 
black-gray cycles resulted from a single rearrangement $r$.
By Lemma~\ref{Le0}, $\Delta_r c^{odd}$ may take only values $-2,0,2$, 
while $|\Delta_r c| = |\Delta_r c^{odd} + \Delta_r c^{even}|\leq 1$
(if $r$ is a 2-break) or $\leq 2$ (if $r$ is a complete 3-break).
The table below lists the possible values of $\Delta_r c^{odd}$ and $\Delta_r c^{even}$,
satisfying these restrictions, 
along with the amount of rearrangements of each particular type in $t$, 
denoted $x_i$ for 2-breaks and $y_j$ for complete 3-breaks.

\begin{center}
\begin{tabular}{|c||c|c|c|c|c||c|c|c|c|c|c|c|c|c|c|c|}
\hline
& 
\multicolumn {5} {|c|} {$n_3(r)=0$} & \multicolumn {11} {|c|}{$n_3(r)=1$} \\
\hline
$\Delta_r c^{odd}$
&0 &0 &0 &-2 &2 & 0 &0 &0 &0 &0 &2 &2 &2 &-2 &-2 &-2\\
\hline
$\Delta_r c^{even}$
&0 &1 &-1 &1 &-1 &0 &1 &-1 &2 &-2 &0 &-1 &-2 &0 &1 &2\\
\hline
amount in $t$ & $x_1$ &$x_2$ &$x_3$ &$x_4$ & $x_5$ & $y_1$ &$y_2$ &$y_3$ &$y_4$ &$y_5$ &$y_6$ & $y_7$ &$y_8$ &$y_9$ & $y_{10}$ &$y_{11}$\\
\hline
\end{tabular}
\end{center}
For the transformation $t$, we have
$$
\begin{cases}
n_2(t) = x_1 + x_2 + x_3 + x_4 + x_5, \\
n_3(t) = y_1 + y_2 + y_3 + y_4 + y_5 + y_6 + y_7 + y_8 + y_9 + y_{10} + y_{11}.
\end{cases}
$$

Calculating the total increase in the number of odd and even black-gray cycles along $t$, we have
$$
\begin{cases}
-2x_4+2x_5+2y_6+2y_7+2y_8-2y_9-2y_{10}-2y_{11}=|P|-c^{odd}(P,Q),\\
x_2-x_3+x_4-x_5+y_2-y_3+2y_4-2y_5-y_7-2y_8+y_{10}+2y_{11}=-c^{even}(P,Q).
\end{cases}
$$
Theorem~\ref{Th1} further implies
$$
\begin{cases}
n_2(t_0) = -x_2+x_3-x_4+x_5-y_2+y_3-2y_4+2y_5+y_7+2y_8-y_{10}-2y_{11}, \\
n_3(t_0) = x_2-x_3+y_2-y_3+2y_4-2y_5+y_6-y_8-y_9+y_{11}.
\end{cases}
$$

Now we can evaluate the difference between the weights of $t$ and $t_0$ as follows:
\begin{eqnarray*}
W_\alpha (t)-W_\alpha (t_0) & = &  n_2(t) - n_2(t_0)+\alpha \cdot (n_3(t)-n_3(t_0)) \\
& = & x_1 + 2x_2+2x_4+y_2-y_3+2y_4-2y_5-y_7-2y_8+y_{10}+2y_{11} \\
&& +\ \alpha\cdot( -x_2+x_3+y_1+2y_3-y_4+3y_5+y_7+2y_8+2y_9+y_{10}) \\
& = & x_1 + (2-\alpha)\cdot x_2 + \alpha\cdot x_3 + 2x_4 + \alpha\cdot y_1+y_2+(2\alpha -1)\cdot y_3 +(2-\alpha)\cdot y_4 \\
&&+\ (3\alpha -2)\cdot y_5 + (\alpha-1)\cdot y_7 + (2\alpha -2)\cdot y_8 + 2\alpha \cdot y_9 + (\alpha+1)\cdot y_{10} + 2\cdot y_{11}.
\end{eqnarray*}
Since $\alpha \in (1,2]$ and $x_i, y_j \geq 0$, all summands in the last expression are nonnegative and thus 
$W_\alpha (t)-W_\alpha (t_0)\geq 0$. Since $t$ is an arbitrary transformation, we have
$$D_{\alpha}(P,Q) = W_{\alpha}(t_0).$$

For $\alpha\in (1,2)$, if $D_{\alpha}(P,Q) = W_{\alpha}(t)$ then 
$W_{\alpha}(t) - W_{\alpha}(t_0) = 0$, implying that only $x_5$ and $y_6$ 
(appearing with zero coefficients in the expression for $W_{\alpha}(t) - W_{\alpha}(t_0)$) 
can be nonzero and thus $t$ is optimal by Corollary~\ref{Cor0}.
\end{proof}

\section{Discussion}

We proved that for $\alpha\in (1,2]$, the minimum-weight transformations include the optimal transformations (Theorem~\ref{Th2})
that may entirely consist of transposition-like operations (modelled as complete 3-breaks) (Corollary~\ref{Cor1}).
Therefore, the corresponding weighted genomic distance does not actually impose any bound on the proportion of transpositions.

For $\alpha\in (1,2)$, we proved even a stronger result that the minimum-weight transformations coincide with the optimal transformations (Theorem~\ref{Th2}).
As a consequence we have that a particular choice of $\alpha\in (1,2)$ imposes no restrictions for the minimum-weight transformations
as compared to other values of $\alpha$ from this interval.
The value $\alpha=\nicefrac{3}{2}$ then proves that the optimal transformations coincide with 
those that 
at the same time have the shortest length and make the smallest number of breakages, 
studied by Alekseyev and Pevzner~\cite{Alekseyev2007}.
We further characterized the optimal transformations within the shortest transformations (i.e., the minimum-weight transformations for $\alpha=1$) 
by showing that the optimal transformations avoid one particular type of rearrangements (Theorem~\ref{Th0}, Fig.~\ref{Forbid2}).

It is worth to mention that the weighted genomic distance with $\alpha\geq 2$ is useless, 
since it allows (for $\alpha=2$) or even promotes (for $\alpha>2$) replacement of every complete 3-break with two equivalent 2-breaks, 
thus eliminating complete 3-breaks at all.

The extension of our results to the case of linear genomes will be published elsewhere.

\bibliographystyle{acm}
\bibliography{bioinf.bib}

\end{document}